\begin{document}

\title{On the eigenvalues of the spatial sign covariance matrix in more than two dimensions }
\author{Alexander D\"urre\footnote{corresponding author: alexander.duerre@udo.edu} \,   and David E.\ Tyler and Daniel Vogel}
\date{}
\maketitle

\vspace{-3.0ex}
\begin{center}
\footnotesize{
\noindent
Fakult\"at Statistik, Technische Universit\"at Dortmund, \\
44221 Dortmund, Germany 

\smallskip
\noindent
Department of Statistics \& Biostatistics, Rutgers University,  \\
Piscataway, NJ 08854, USA 

\smallskip
\noindent
Institute for Complex Systems \& Mathematical Biology, University of Aberdeen, \\
Aberdeen AB24 3UE, United Kingdom 
}
\end{center}

\begin{abstract} 
We gather several results on the eigenvalues of the spatial sign covariance matrix of an elliptical distribution. It is shown that the eigenvalues are a one-to-one function of the eigenvalues of the shape matrix and that they are closer together than the latter. We further provide a one-dimensional integral representation of the eigenvalues, which facilitates their numerical computation. 
\end{abstract}

{\footnotesize
\emph{Keywords:} elliptical distribution; spatial Kendall's tau matrix; spatial sign.

\emph{MSC 2010:} 62H12; 62G20; 62H11.
}


\section{Introduction}

Let $X, X^{(1)}, \ldots, X^{(n)}$ be i.i.d.\ $p$-variate random vectors with distribution $F$ and let $\X_n = (X^{(1)}, \ldots, X^{(n)})^\top$ be the $n \times p$ data matrix. 
The sample spatial sign covariance matrix (SSCM) is defined as 
\[ 
	S_n	=	S_n(\X_n)=\frac{1}{n}\sum_{i=1}^n s(X^{(i)}-\mu_n)s(X^{(i)}-\mu_n)^\top,
\] 
where $\mu_n = \mu_n(\X_n) = \argmin_{\mbox{\scriptsize $\mu \in \R^p$}} \sum_{i=1}^n | X^{(i)} - \mu |$ is the spatial median of $\X_n$, and $s(x)$ denotes the spatial sign of $x \in \R^p$, i.e.,  $s(x) = x/|x|$ for $x \neq 0$ and $s(0) = 0$. Throughout, $|\cdot|$ denotes the Euclidean norm in $\R^p$.
Under very mild conditions on the data distribution $F$ \citep{duerre:vogel:tyler:2014}, the sample SSCM is strongly consistent for its population value
\[
	S(F) = E \left\{ s(X-\mu) s(X-\mu)^\top \right\}, 
\]
where $\mu = \mu(F) = \argmin_{\mbox{\scriptsize $\mu \in \R^p$}} E \left( |X - \mu| -|X|\right)$ is the population spatial median. 
In this note we study the population SSCM $S(F)$ under $p$-dimensional elliptical distributions. 
The random vector $X$ is said to be elliptical if it can be expressed as 
\be \label{eq:elliptical}
	X = A R U + \mu
\ee
for some $\mu \in \R$ and some arbitrary $p\times p$ matrix $A$, where $R$ is a univariate non-negative random variable, $U$ is uniformly distributed on the unit sphere in $\R^p$, and $R$ and $U$ are independent.
We impose the minimal regularity assumption that $A \neq 0$. For notational convenience, we presume $P(R = 0) = 0$.  The results, though, readily extend to the general case $P(R =0)  \geq 0$.
We call $V = A A^\top$ the shape matrix of $F$. For given $F$, the shape matrix $V$ is unique up to scale, and we understand the \emph{shape} of an elliptical distribution as an equivalence class of proportional, non-negative definite matrices. We use $F \in \Ee_p(\mu, V)$ to denote that $F$ is a $p$-dimensional distribution with shape matrix $V$ and central location $\mu$. The parameter $\mu$ coincides with the spatial median. If $A$ has full rank and $R$ has an absolutely continuous distribution, then $F$ possesses a Lebesgue-density $f$ in $\R^p$, which is of the form $f(x) = \det(V)^{-\frac{1}{2}}g((x-\mu)^\top V^{-1}(x-\mu))$ for some function $g:[0,\infty) \to [0,\infty)$. However, we do not make this assumption. 
We call $V_0 = V/\trace(V)$ the \emph{trace-normalized shape matrix}. It is popular to fix the shape by setting $\det(V) = 1$ \citep{paindaveine2008}, but we prefer to normalize the shape by the trace when comparing it to the SSCM, since by definition the latter has always trace one and so the corresponding sets of eigenvalues are of the same magnitude. All results given below hold true regardless of the specific choice of the normalization.

The SSCM $S(F)$ is given in terms of $V_0$ as follows. Let $V_0 = O \Lambda O^\top$ be an eigenvalue decomposition with $\Lambda = \diag(\lambda_1,\ldots,\lambda_p)$ with $\lambda_1 \ge \lambda_2 \ge \ldots \ge \lambda_p \ge 0$. Keep in mind that $\sum_j \lambda_j = \trace(V_0) = 1$. Then $S = O \Delta O^\top$ with $\Delta = \diag(\delta_1, \ldots, \delta_p)$, 
where
\be \label{eq:delta}
	\delta_i = E \left\{ \lambda_i Y_i^2 \left( \textstyle \sum\nolimits_{j=1}^p \lambda_j Y_j^2\right)^{-1} \right\}
\ee
and $Y = (Y_1,\ldots,Y_p) = R U$ has a spherical distribution. 
%
%
%
The asymptotic covariance matrix $W_S$ of the estimator $S_n$ at elliptical distributions can be written as 
\[
W_S = (O \times O) \left\{ \Gamma - \vec{\Delta} (\vec{\Delta})^\top \right\} (O \times O)^\top,
\] with
\[
	\Gamma = E \left\{ \vec \left( \frac{\Lambda^{1/2} Y Y^\top \Lambda^{1/2}}{Y^\top\Lambda Y}\right)  \vec \left( \frac{\Lambda^{1/2} Y Y^\top \Lambda^{1/2}}{Y^\top\Lambda Y}\right)^\top \right\}.
\]
Due to the spherical symmetry of $Y$, $p(p^3-3p +2)$ of the $p^4$ matrix entries of $\Gamma$ are zero. The remaining $p(3p-2)$ entries consist of at most $(p+1)p/2$ distinct values, with the upper bound being achieved if the eigenvalues $\lambda_1,\ldots, \lambda_p$ of $V_0$ are mutually distinct. 
Letting
\be \label{eq:eta}
	\eta_{ij} =  E \left\{ \lambda_i Y_i^2 \lambda_j Y_j^2 \left(\textstyle\sum_{j=1}^p \lambda_j Y_j^2\right)^{-2} \right\},  
	\qquad 1 \le i,j \le p,
\ee
we have for $1 \le i < j \le p$, each $\eta_{ij}$ appears six times in $\Gamma$, that is at the positions $\{(i-1)p+j,\,(i-1)p+j)\}$,
$\{(i-1)p+i,\,(j-1)p+j)\}$, $\{(i-1)p+j,\,(j-1)p+i)\}$, and the same with $i$ and $j$ interchanged. For $1 \le i \le p$, each $\eta_{ii}$ appears once at position $\{(i-1)p+i,\,(i-1)p+i)\}$.

It is important to note that $s(X)$ is invariant under the distribution of $R$. We thus have the liberty of choosing any specific spherical distribution for $Y$(which only has to satisfy $P(Y=0) = 0$) when analyzing the integrals (\ref{eq:delta}) and (\ref{eq:eta}). This observation is central for the results given below. 
%
%
%
%
%
%
%
%
Let 
\[
	\Phi_p = \left\{ x = (x_1, \ldots, x_p) \in [0,1]^p\, \middle| 1 \ge x_1 \ge x_2 \ge \ldots \ge x_p \ge 0, \ \sum\nolimits_{j=1}^p x_j = 1 \right\}
\]
and $\phi:\Phi_p \to \Phi_p$ denote the function that maps $(\lambda_1, \ldots, \lambda_p)$ to $(\delta_1, \ldots, \delta_p)$. The function $\phi$ is given by (\ref{eq:delta}). Below we give three results that relate the sets of eigenvalues $(\lambda_1,\ldots,\lambda_p)$ and $(\delta_1,\ldots,\delta_p)$. The proofs are deferred to the Appendix.

%
%
%
%
%

\section{On the eigenvalues of the spatial sign covariance matrix}

\begin{proposition} \label{lem:injective}
The function $\phi$ is injective, i.e., for any two vectors $\lambda, \tilde\lambda \in \Phi_p$, we have that $\phi(\lambda) = \phi(\tilde\lambda)$ implies $\lambda = \tilde\lambda$.
\end{proposition}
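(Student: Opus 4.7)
My plan is to exploit two structural properties of the map $\phi$. First, the formula
\[
  \delta_i = E\!\left\{ \lambda_i Y_i^2 \left(\sum\nolimits_j \lambda_j Y_j^2\right)^{\!-1} \right\}
\]
makes $\phi$ homogeneous of degree zero in $\lambda$, that is $\phi(c\lambda)=\phi(\lambda)$ for every $c>0$, so $\phi$ extends naturally to the whole positive orthant. Second, exploiting the freedom in the choice of the spherical law for $Y$ pointed out after (\ref{eq:eta}), I would pick $Y$ to be standard Gaussian; then, whenever $\lambda_i>0$, the ratio $\lambda_i Y_i^2/\sum_j\lambda_j Y_j^2$ is almost surely a weakly decreasing function of each $\lambda_j$ with $j\neq i$, strictly so on a set of positive measure. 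Combining homogeneity with this strict monotonicity will show that $\phi$ cannot identify two distinct points of $\Phi_p$.

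Before running the main argument I first reduce to the relative interior of $\Phi_p$. For Gaussian $Y$ the integrand defining $\delta_i$ is strictly positive on a set of positive measure whenever $\lambda_i>0$, so $\delta_i=0$ precisely when $\lambda_i=0$. Hence $\phi(\lambda)=\phi(\tilde\lambda)$ forces $\lambda$ and $\tilde\lambda$ to share the same set of zero coordinates; dropping those coordinates (which affect neither the surviving $\delta_i$'s nor the surviving $\lambda_i$'s) yields an analogous problem in lower dimension in which both vectors are strictly positive. Without loss of generality I therefore assume $\lambda$ and $\tilde\lambda$ lie in the relative interior of $\Phi_p$.

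Now suppose $\phi(\lambda)=\phi(\tilde\lambda)$ with $\lambda\neq\tilde\lambda$ and all coordinates positive. Put $\alpha=\min_i(\lambda_i/\tilde\lambda_i)>0$, let $i_0$ attain this minimum, and consider $\alpha\tilde\lambda$. By construction $\alpha\tilde\lambda\le\lambda$ coordinatewise with equality at $i_0$; and since $\lambda$ and $\tilde\lambda$ both sum to one but are unequal, they cannot be proportional, so $\alpha\tilde\lambda_j<\lambda_j$ for some $j\neq i_0$. Homogeneity gives $\delta_{i_0}(\alpha\tilde\lambda)=\delta_{i_0}(\tilde\lambda)=\delta_{i_0}(\lambda)$. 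A pointwise comparison of the two integrands yields
\[
  \frac{\lambda_{i_0}Y_{i_0}^2}{\sum_k\lambda_k Y_k^2} \;\le\; \frac{\alpha\tilde\lambda_{i_0}Y_{i_0}^2}{\sum_k\alpha\tilde\lambda_k Y_k^2}
\]
almost surely, since the numerators coincide while the left-hand denominator is the larger; on the event $\{Y_{i_0}\neq 0,\,Y_j\neq 0\}$, which has positive probability under the Gaussian law, the inequality is strict. Taking expectations yields $\delta_{i_0}(\lambda)<\delta_{i_0}(\tilde\lambda)$, contradicting $\phi(\lambda)=\phi(\tilde\lambda)$.

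The step I expect to be most delicate is ensuring that the strict pointwise inequality survives the expectation; this is exactly where the freedom to pick a spherical $Y$ that puts no mass on any coordinate hyperplane is needed. Once this is granted, homogeneity and the ``shift the smaller vector up until it touches'' device do the remaining work, and the same scheme adapts verbatim to the boundary cases via the preliminary reduction above.
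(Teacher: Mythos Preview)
Your argument is correct and is essentially the same as the paper's: both reduce to strictly positive coordinates, then compare at the index where the ratio $\lambda_i/\tilde\lambda_i$ is extremal (you take the minimum, the paper takes the maximum) and use the resulting pointwise inequality between the integrands to derive a strict inequality between the corresponding $\delta$'s. Your explicit use of homogeneity and rescaling $\tilde\lambda$ by $\alpha$ is just a repackaging of the paper's substitution $Z_k=\tilde\lambda_k Y_k^2$, $\rho_k=\lambda_k/\tilde\lambda_k$.
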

The content of Proposition \ref{lem:injective} is also mentioned in \cite{Magyar2014} and \citet{Vogel2015}. 
From (\ref{eq:delta}) it can be deduced that the $\delta_i$ obey the same ordering as the $\lambda_i$, i.e., $\delta_1 \ge \ldots \ge \delta_p \ge 0$, and furthermore that $\delta_i = \delta_j$ if and only if $\lambda_i = \lambda_j$, $1 \le i < j \le p$. The phrasing in \citet{Vogel2015} misleadingly suggests that Proposition \ref{lem:injective} can be readily concluded from this observation. \citet{Vogel2015} further note that the ``eigenvalues of $S(F)$ tend to be closer together than the eigenvalues of $V_0$''. 
Another way of phrasing this is to say, the ellipsoid associated with the SSCM is less elliptic or eccentric than that corresponding to the shape matrix. The next assertion makes this statement rigorous. 
\begin{proposition} \label{lem:closer.together} 
Let $1 \le i < j \le p$. If $\lambda_j > 0$, then $\delta_i/\delta_j \le \lambda_i/\lambda_j$. If additionally $\lambda_i > \lambda_j$, then the inequality is strict.
\end{proposition}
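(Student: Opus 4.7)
The plan is to replace the target inequality $\delta_i/\delta_j \le \lambda_i/\lambda_j$ by the equivalent per-eigenvalue inequality $\delta_i/\lambda_i \le \delta_j/\lambda_j$ (permissible because $\lambda_j > 0$ forces $\lambda_i > 0$ and $\delta_j > 0$), and then to produce a one-dimensional integral representation of $\delta_i/\lambda_i$ whose dependence on $i$ is confined to a single monotone factor. Exploiting the observation in the excerpt that the spherical law of $Y$ in (\ref{eq:delta}) may be chosen freely, I would take $Y$ to have i.i.d.\ standard normal coordinates. Writing $S = \sum_k \lambda_k Y_k^2$ and using $1/S = \int_0^\infty e^{-tS}\,dt$ together with Fubini (legitimate by nonnegativity),
\[
	\frac{\delta_i}{\lambda_i} \;=\; E\!\left[\frac{Y_i^2}{S}\right] \;=\; \int_0^\infty E\!\left[Y_i^2 e^{-tS}\right] dt.
\]

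Next I would evaluate the integrand explicitly. By independence of the coordinates together with the Gaussian identities
\[
	E\!\left[e^{-t\lambda_k Y_k^2}\right] = (1+2t\lambda_k)^{-1/2}, \qquad E\!\left[Y_i^2 e^{-t\lambda_i Y_i^2}\right] = (1+2t\lambda_i)^{-3/2},
\]
I obtain, after the substitution $s = 2t$,
\[
	\frac{\delta_i}{\lambda_i} \;=\; \frac{1}{2}\int_0^\infty \frac{h(s)}{1+s\lambda_i}\,ds, \qquad h(s) = \prod_{k=1}^p (1+s\lambda_k)^{-1/2}.
\]
The function $h$ is strictly positive on $[0,\infty)$ and, crucially, does not depend on $i$; the entire $i$-dependence has been absorbed into the factor $1/(1+s\lambda_i)$, which is monotone decreasing in $\lambda_i$.

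The conclusion is then immediate: since $i < j$ forces $\lambda_i \ge \lambda_j$, we have $(1+s\lambda_i)^{-1} \le (1+s\lambda_j)^{-1}$ for every $s \ge 0$, and integration against the common positive weight $h(s)$ yields $\delta_i/\lambda_i \le \delta_j/\lambda_j$, which rearranges to the desired $\delta_i/\delta_j \le \lambda_i/\lambda_j$. When additionally $\lambda_i > \lambda_j$, the pointwise inequality between the integrands is strict on $(0,\infty)$, so the integrals differ strictly. The only calculations with any content are the two Gaussian moment evaluations, and these are entirely routine; the real conceptual step is the decision to pass to Gaussian spherical coordinates and apply the Laplace identity, which together collapse the $i$-dependence into a single monotone factor and reduce the proposition to a trivial pointwise comparison.
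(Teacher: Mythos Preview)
Your proof is correct and takes a genuinely different route from the paper's. The paper rewrites the claim as $E\{(Y_j^2 - Y_i^2)/\sum_k \lambda_k Y_k^2\} > 0$, conditions on the remaining coordinates, and then applies an orthogonal rotation in the $(X_1,X_2)=(Y_i^2,Y_j^2)$ plane together with the exchangeability of $(X_1,X_2)$ to produce an integrand that is manifestly positive; no integral representation is invoked. You instead pass through the Laplace identity $1/S=\int_0^\infty e^{-tS}\,dt$ with Gaussian coordinates, which collapses the $i$-dependence into the single monotone factor $(1+s\lambda_i)^{-1}$ against a common weight, so the inequality becomes a pointwise comparison. Your method is shorter and, as a bonus, yields along the way the one-dimensional integral formula $\delta_i = \tfrac{\lambda_i}{2}\int_0^\infty (1+s\lambda_i)^{-1}\prod_k(1+s\lambda_k)^{-1/2}\,ds$, which is exactly the content of the paper's Proposition~\ref{lem:integral} for the $\delta_i$; the paper proves that formula separately and by a quite different device (uniform distribution on the unit ball plus a tabulated identity from Gradshteyn--Ryzhik). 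The paper's symmetry argument, on the other hand, is self-contained and does not rely on any integral representation.
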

In the last proposition, we give one-dimensional integral representations for the eigenvalues $\delta_i$ of the SSCM and the expectations $\eta_{ij}$, cf.~(\ref{eq:eta}).
\begin{proposition} \label{lem:integral}
The integrals (\ref{eq:delta}) and (\ref{eq:eta}) can be expressed as
\[
	\delta_i = \frac{\lambda_i}{2}
	\int_0^\infty
	\frac{1}{(1 + \lambda_i x) \prod_{j=1}^p (1 + \lambda_j x)^{1/2}} d x,
	\qquad 1 \le i \le p, 
\]\[
	\eta_{ij}
	= \frac{\lambda_i \lambda_j}{4}
	\int_0^\infty
	\frac{x}{(1 + \lambda_i x) (1 + \lambda_j x)\prod_{j=1}^p (1 + \lambda_j x)^{1/2}} d x,
	\qquad 1 \le i, j \le p, \ i \neq j, 
\]\[
	\eta_{ii}
	= \frac{3 \lambda_i^2}{4}
	\int_0^\infty
	\frac{x}{(1 + \lambda_i x)^2 \prod_{j=1}^p (1 + \lambda_j x)^{1/2}} d x,
	\qquad 1 \le i \le p.
\]
\end{proposition}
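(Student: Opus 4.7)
The plan is to exploit the invariance of $s(X)$ under the radial distribution $R$, noted already in the paper, and replace the spherical vector $Y$ by a standard Gaussian vector $Y \sim N(0, I_p)$. Then the coordinates $Y_1,\ldots,Y_p$ are independent with $Y_j^2 \sim \chi_1^2$, and the denominators in (\ref{eq:delta}) and (\ref{eq:eta}) can be linearized via the Laplace-transform identities
\[
	\frac{1}{a} = \int_0^\infty e^{-ax}\, dx, \qquad \frac{1}{a^2} = \int_0^\infty x\, e^{-ax}\, dx, \qquad a > 0.
\]
Applying the first to $a = \sum_j \lambda_j Y_j^2$ in (\ref{eq:delta}) and the second to its square in (\ref{eq:eta}), followed by Fubini to swap expectation and integration, reduces each formula to an integral whose integrand factorizes over the independent coordinates.

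Next, I would compute the resulting one-dimensional Gaussian expectations in closed form. The basic building block is $E[e^{-t Y_j^2}] = (1 + 2t)^{-1/2}$; differentiating once and twice with respect to $t$ yields $E[Y_j^2 e^{-t Y_j^2}] = (1 + 2t)^{-3/2}$ and $E[Y_j^4 e^{-t Y_j^2}] = 3(1 + 2t)^{-5/2}$. With these, the integrals for $\delta_i$, $\eta_{ij}$ $(i\neq j)$ and $\eta_{ii}$ become
\[
	\lambda_i \int_0^\infty \frac{dx}{(1+2\lambda_i x)\prod_{j=1}^p (1+2\lambda_j x)^{1/2}},
\]
\[
	\lambda_i \lambda_j \int_0^\infty \frac{x\, dx}{(1+2\lambda_i x)(1+2\lambda_j x)\prod_{k=1}^p (1+2\lambda_k x)^{1/2}},
\]
\[
	3\lambda_i^2 \int_0^\infty \frac{x\, dx}{(1+2\lambda_i x)^2 \prod_{k=1}^p (1+2\lambda_k x)^{1/2}},
\]
respectively. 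The substitution $u = 2x$ then absorbs the factor of $2$ into the prefactors and produces exactly the three stated formulas.

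The manipulation is essentially routine, so the only point that needs justification is the interchange of expectation and the improper integral. This is handled by Tonelli's theorem since all integrands are non-negative; alternatively, one can check that for $Y \sim N(0, I_p)$ and $\lambda \in \Phi_p$ with $\lambda_1 > 0$, the integrand $Y_i^2 e^{-x \sum_j \lambda_j Y_j^2}$ is jointly measurable and the resulting iterated integral is finite (it equals $\delta_i/\lambda_i \leq 1/\lambda_i$). The case $\lambda_p = 0$ does not cause trouble because the corresponding factor $(1+\lambda_p u)^{1/2} = 1$ is simply absent, and $\delta_p$ and $\eta_{pp}$ vanish on both sides. No further difficulty is anticipated; the main insight is the freedom, emphasised in the introduction, to pick the most convenient spherical representative for $Y$.
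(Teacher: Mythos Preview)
Your argument is correct and complete, but it follows a genuinely different route from the paper's. Both proofs exploit the freedom to choose a convenient spherical distribution for $Y$, but the paper takes $Y$ uniform on the unit \emph{ball} in $\R^p$, substitutes $y_k = \sqrt{z_k}$ to map the ball onto the simplex $S_{1,p}$, and then appeals to the tabulated identity (Gradshteyn--Ryzhik 4.646)
\[
	\int_{S_{1,n}} \frac{\prod_{k} x_k^{p_k-1}}{(\sum_{k} q_k x_k)^r}\,dx
	=
	\frac{\prod_k \Gamma(p_k)}{\Gamma(\sum_k p_k - r+1)\,\Gamma(r)}
	\int_0^\infty \frac{x^{r-1}}{\prod_{k} (1+q_k x)^{p_k}}\,dx,
\]
with suitably chosen exponents $p_k$ and $r$. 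Your approach instead takes $Y$ standard Gaussian, linearizes the denominator via $a^{-r} = \Gamma(r)^{-1}\int_0^\infty x^{r-1}e^{-ax}\,dx$, and then factorizes over the independent coordinates using the $\chi_1^2$ moment generating function. What you gain is a self-contained derivation that avoids citing an external integral table and makes the origin of the factors $(1+\lambda_j x)^{-1/2}$ transparent as Gaussian Laplace transforms; what the paper's route gains is brevity once the tabulated formula is granted, and a clearer link to classical Dirichlet-type integrals on the simplex. The two arguments are essentially dual: the Gradshteyn--Ryzhik formula itself is typically proved by the very Laplace-transform trick you use.
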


%
%
%
%
%
%

\section{Applications}

Since the SSCM maintains the eigenvectors and the ordering of the eigenvalues of the shape matrix, it has been employed previously for analyses that rely on the orientation of the data only, most notably principal component analysis \citep[e.g.][]{marden1999, Croux2002, locantore1999, gervini2008}. 
Proposition \ref{lem:integral} considerably extends the applicability of the SSCM to analyses that require the full shape information. This includes all inference concerning multivariate dependencies under normality, for instance estimating correlations and partial correlations.
It has been noted before that, in dimension $p = 2$, the function $\phi$ and its inverse admit a simple analytic form, which allows to construct a 
robust pairwise correlation estimate based on the SSCM \citep{duerre:vogel:fried:2015, duerre:vogel:2016}.

The $p$-dimensional integrals (\ref{eq:delta}) and (\ref{eq:eta}) are unfeasible for numerical approximations. Solving the one-dimensional integrals given in Proposition \ref{lem:integral} numerically, the population SSCM as well as the asymptotic variance matrix of the sample SSCM can be computed in any dimension. Using the function integrate() in R \citep{R}, we found it to work without problems for $p = 10,\!000$.  Moreover, by means of a Newton-type algorithm, the function $\phi$ can be inverted, and one can hence construct a consistent shape matrix estimate solely based on the SSCM. Both, the function $\phi$ as well its inverse are implemented in the R-package sscor \citep{sscor}. Proposition \ref{lem:integral} provides a ``workable description of the eigenvalues of the $p$-dimensional spatial sign covariance matrix in terms of the eigenvalues of the covariance matrix'', which was identified as an open problem in \citet{duerre:vogel:2016}. 

Groups of eigenvalues of the shape matrix being well separated is the working assumption for principal component analysis.
Since the sample SSCM is inefficient in such a situation, the use of the SSCM for robust principal component analysis has also been questioned \citep[e.g.][]{Bali2011, Magyar2014}.
  In the case of two distinct eigenvalues, \citet{Magyar2014} investigate the asymptotic efficiency of the SSCM eigenspace projections by employing a representation of the eigenvalues $\delta_i$ and the $\eta_{ij}$-terms by means of the Gauss hyperbolic function. 
Proposition \ref{lem:integral} allows to quantify the asymptotic efficiency of the SSCM and any analysis build upon it in the general setting. 

It also important to note that the SSCM works for $p > n$. For many years, much attention has been paid to affine equivariant scatter estimation within the area of robust multivariate statistics. This paradigm, however, is unfeasible if $p > n$ \citep{Tyler2010}. With the increasing emergence of high-dimensional problems in recent years, there is also a renewed interest in non-affine equivariant estimators. 
The SSCM is an appealing candidate, not despite but because of its simplicity. Not only does it provide a robust scatter estimator in the $n < p$ setting, but also one which is computationally very feasible for large $p$.

%
%
%
%
%
%

An estimator related to the SSCM is the \emph{spatial Kendall's $\tau$ matrix}
\[
	K_n(\X_n) = 
 \frac{1}{\binom{n}{2} }\sum_{1\le i < j  \le n} s(X^{(i)}-X^{(j)}) s(X^{(i)}-X^{(j)})^\top,
\] 
which is the SSCM applied to the pairwise differences of the observations instead of the centered observations.
The spatial Kendall's $\tau$ matrix (or symmetrized SSCM) is computationally less appealing, has a lower breakdown point than the SSCM, it is not distribution-free under ellipticity, and its asymptotic variance has a more complex structure. However, it possesses a substantially higher efficiency at normality, which has lead several authors to consider this estimator in various contexts \citep{choi:marden:1998,taskinen:koch:oja:2012,fan:liu:wang:2015}.
By $U$-statistics theory, $K_n$ is asymptotically normal and consistent for
\[
	K(F) = E \{ s(X-Y)s(X-Y)^\top \}, \quad \qquad  X,Y \sim F, \mbox{ independent.}
\]
Since, for elliptical distributions $F$, the random vector $X-Y$ is elliptical with center zero and the same shape matrix as $X$, we have $S(F) = K(F)$, and any results concerning the eigenvalues of $S(F)$ also apply to $K(F)$.

Finally, our results concerning the spatial sign covariance matrix also extend to the \emph{generalized elliptical family}, as considered, e.g., by \citet[][Ch.~3]{Frahm2004}. The generalized elliptical model is generated by (\ref{eq:elliptical}) with
$A$, $R$, $U$ and $\mu$ as in (\ref{eq:elliptical}), except that $R$ and $U$ need not be independent. This class has 	been called \emph{distributions with elliptical direction} by \citet{randles:1989}.
Then $S(F)$ as well as the asymptotic covariance matrix of $S_n$ can be expressed  in terms of the eigenvalues of the shape matrix $V = A A^\top$ in the same way as under ellipticity. However, the proportionality between the shape matrix and the covariance matrix does not extend from the elliptical model to the generalized elliptical model.


\section*{Acknowledgments}

Alexander D\"urre was supported in part by the Collaborative Research Grant 823 of the German Research Foundation.   David E.\ Tyler was supported in part by the National Science Foundation grant DMS-1407751. A visit of Daniel Vogel to David E.\ Tyler was supported by a travel grant from the Scottish Universities Physics Alliance. The authors are grateful to the editors and referees for their constructive comments.

%
%
%
%
%
%
%
%


\appendix
\section{Proofs}

\begin{proof}[Proof of Proposition~\ref{lem:closer.together}]
If $\lambda_i = \lambda_j$, then $\delta_i = \delta_j$ by virtue of (\ref{eq:delta}). Assume $0 < \lambda_j < \lambda_i$. Using again (\ref{eq:delta}), the claim $\delta_i/\delta_j < \lambda_i/\lambda_j$ is equivalent to 
$E\{ Y_i^2/\sum_{k=1}^p \lambda_k Y_k^2 \}/ E\{ Y_j^2/\sum_{k=1}^p \lambda_k Y_k^2 \} < 1$, or equivalently
\[
	E \left\{ \frac{Y_j^2 - Y_i^2}{ Y_i^2 + (\lambda_j/\lambda_i) Y_j^2 + \sum_{k\neq i,j} (\lambda_k/\lambda_i) Y^2_k} \right\} > 0.
\]
Let $W = \sum_{k\neq i,j} (\lambda_k/\lambda_i) Y^2_k$, furthermore $X_1 = Y^2_i$, $X_2 = Y^2_j$ and $r = \lambda_j/\lambda_i$, hence $0 < r < 1$.
If we let $(Y_1,\ldots,Y_p)$ be standard normal, then $X_1, X_2 \sim \chi^2_1$ and $X_1, X_2, W$ are independent. The latter implies that the conditional distribution of $(X_1,X_2)$ given $W$ is the same as its unconditional distribution. Hence it suffices to prove
\be \label{eq:conditional}
	E \left\{ \frac{X_2-X_1}{r X_2 + X_1 + w} \right\} > 0 
\ee
for every $w > 0$. We apply the orthogonal transformation 
\[
	\begin{pmatrix}
		Z_1 \\
		Z_2
	\end{pmatrix}	
	=	\frac{1}{\sqrt{2}}
		\begin{pmatrix}
			1 & 1 \\
			-1 & 1 \\
	\end{pmatrix}	
		\begin{pmatrix}
		X_1 \\
		X_2
	\end{pmatrix}	
, \quad \mbox{  i.e., } \quad
		\begin{pmatrix}
		X_1\\
		X_2
	\end{pmatrix}	
	=	\frac{1}{\sqrt{2}}
		\begin{pmatrix}
			1 & -1 \\
			1 & 1 \\
	\end{pmatrix}	
		\begin{pmatrix}
		Z_1 \\
		Z_2
	\end{pmatrix}	
\]
and obtain that the expectation on the left-hand side of (\ref{eq:conditional}) is equal to 
\[
	 2 E \left\{ \frac{Z_2}{ (r+1) Z_1  + (r-1) Z_2 + c} \right\}  
	= 2 E \left[ E \left\{ \frac{Z_2}{ (r+1) Z_1  + (r-1) Z_2 + c}\, \middle|\, Z_1 \right\} \right]
\]
with $c = \sqrt{2}w$.
The inner integral is 
\[
	I(z_1) = 
	\int_{-z_1}^{z_1} \frac{z_2 f(z_2|z_1)}{ (r+1) z_1  + (r-1) z_2 + c}  d z_2,
\]
where $f(z_2|z_1)$ denotes the conditional density of $Z_2$ given $Z_1 = z_1$. Since $(X_1,X_2) = (X_2, X_1)$ in distribution, we have that, for every $z_1 \ge 0$, the conditional distributions given $Z_1 = z_1$ of $Z_2$ and $-Z_2$ are the same, and hence $f(z_2|z_1) = f(-z_2|z_1)$. Therefore,
\[
	I(z_1) = 
	\int_{-z_1}^0 \frac{z_2 f(z_2|z_1)}{ (r+1) z_1  + (r-1) z_2 + c}  d z_2 + 
	\int_0^{z_1} \frac{z_2 f(z_2|z_1)}{ (r+1) z_1  + (r-1) z_2 + c} d z_2
\]
\[
	\qquad  = 
	\int_0^{z_1} \left( \frac{z_2 f(z_2|z_1)}{ (r+1) z_1  + (r-1) z_2 + c} + \frac{-z_2 f(z_2|z_1)}{ (r+1) z_1  - (r-1) z_2 + c}\right)  d z_2
\]
\[
	\qquad  = 
	\int_0^{z_1}  \frac{2 (1-r) z^2_2 f(z_2|z_1)}{ \{ (r+1) z_1  + c\}^2  - (1-r)^2 z_2^2 }  d z_2.
\]
The integrand is strictly positive over the integration domain. Hence (\ref{eq:conditional}) holds, and the proof is complete. 
\end{proof}
\begin{proof}[Proof of Proposition~\ref{lem:injective}]
Let $\lambda$, $\tilde\lambda \in \Phi_p$ and $\lambda \neq \tilde\lambda$. We show that $\phi(\lambda) = \phi(\tilde\lambda)$ can not hold. The latter would be equivalent to 
\be \label{eq:injective1}
	 E \left\{ \lambda_i Y_i^2 \left( \textstyle \sum\nolimits_{j=1}^p \lambda_j Y_j^2\right)^{-1} \right\}
	= E \left\{ \tilde\lambda_i Y_i^2 \left( \textstyle \sum\nolimits_{j=1}^p \tilde\lambda_j Y_j^2\right)^{-1} \right\}
	\quad \mbox{ for all } 1 \le i \le p.
\ee
If there is a $k \in \{ 1,\ldots, p \}$ such that $\lambda_k = 0$ and $\tilde\lambda_k \neq 0$ or vice versa, then (\ref{eq:injective1}) is clearly violated. Thus we have $\lambda_k = 0 \Leftrightarrow \tilde\lambda_k = 0$ for all $1 \le k \le p$. Next, let $\rho_k = \lambda_k/\tilde\lambda_k$ for $\tilde\lambda_k \neq 0$ and $\rho_k = 0$ otherwise, and define $Z_k = \tilde\lambda_k Y_k^2$, $1 \le k \le p$. Then (\ref{eq:injective1}) can be expressed as
\be \label{eq:injective2}
	E \left\{ Z_i \left( \textstyle \sum\nolimits_{j=1}^p Z_j\right)^{-1} \right\} 
	=
	E \left\{ \rho_i Z_i \left( \textstyle \sum\nolimits_{j=1}^p \rho_j Z_j\right)^{-1} \right\} 
	\quad \mbox{ for all } 1 \le i \le p.
\ee
Let $\rho_{(1)}, \ldots, \rho_{(p)}$ be the numbers $\rho_1, \ldots, \rho_p$ arranged in ascending order. Note that $\rho_{(1)} = \rho_{(p)}$ is only possible if $\lambda \propto \tilde\lambda$, which contradicts our assumptions. Hence $\rho_{(1)} < \rho_{(p)}$. Let $k_0$ be such that $\rho_{k_0} = \rho_{(p)}$.
Since $Y = (Y_1, \ldots, Y_p)$ is spherical with $P(Y=0) = 0$, we have $\sum_{j=1}^p \rho_j Z_j < \rho_{k_0} \sum_{j=1}^p  Z_j$ almost surely, and consequently 
\[
	E \left\{ \rho_{k_0} Z_{k_0} \left( \textstyle \sum\nolimits_{j=1}^p \rho_j Z_j\right)^{-1} \right\}
	> 
	E \left\{ \rho_{k_0} Z_{k_0} \left( \rho_{k_0} \textstyle \sum\nolimits_{j=1}^p Z_j\right)^{-1} \right\}
	= 
	E \left\{ Z_{k_0} \left( \textstyle \sum\nolimits_{j=1}^p Z_j\right)^{-1} \right\}, 
\]
and so (\ref{eq:injective2}) and hence (\ref{eq:injective1}) can not hold. 
\end{proof}
\begin{proof}[Proof of Proposition~\ref{lem:integral}]
We exercise the liberty to choose an appropriate distribution for $Y$ and take the uniform distribution on the unit ball (not the unit sphere) with density
	$f(y)= p/2 \Gamma(p/2)\pi^{-p/2} \mathds{1}_{[0,1]}(y^\top y)$,
which yields 
\[
	\delta_i 
	\ =\ \frac{p\Gamma(p/2)}{2\pi^{p/2}}\int_{y^\top y \le 1}\frac{\lambda_iy_i^2}{\lambda_1y_1^2+\ldots+\lambda_py_p^2}dy
	\ =\ \frac{2^p p\Gamma(p/2)}{2\pi^{p/2}} \int_{S_2} \frac{\lambda_iy_i^2}{\lambda_1y_1^2+\ldots+\lambda_py_p^2}dy
\]
with $y=(y_1,\ldots, y_p)$ and $S_{2,p} = \{ y \in \R^p \,|\, y^\top y \le 1, y \ge 0 \}$.  
Substituting $y_ k=\sqrt{z_k}$, $1 \le k \le p$,  we get
\be \label{eq:substitution}
	\delta_i
	= \frac{ p\Gamma(p/2)}{2\pi^{p/2}}
	  \int_{S_{1,p} } \frac{\lambda_i z_i}{\lambda_1 z_1+\ldots+\lambda_p z_p} \prod_{j=1}^p\frac{1}{\sqrt{z_j}} dz
\ee
with $S_{1,p} = \{ z \in \R^p \,|\, z_1 + \ldots + z_p \le 1, z \ge 0 \}$. Now we apply formula 4.646 in \citet{Gradshteyn2007}:
\be \label{eq:integralformel}
	\int_{S_{1,n}} \frac{\prod_{k=1}^n x_k^{p_k-1}}{(\sum_{k=1}^n q_k x_k)^r}dx 
	= 
	\frac{\Gamma(p_1)\cdot \ldots \cdot \Gamma(p_n)}{\Gamma(\sum_{k=1}^n p_k - r+1)\Gamma(r)}
		\int_0^\infty \frac{x^{r-1}}{\prod_{k=1}^n (1+q_k x)^{p_k}}dx
\ee
for $p_1,\ldots,p_n,q_1,\ldots q_n > 0$, $p_1+\ldots+p_n>r>0$. Setting $n = p$, $r =1$, $q_k = \lambda_k$ for $1 \le k \le p$, $p_i = 3/2$, and $p_k = 1/2$ for $k \neq i$, we obtain from (\ref{eq:substitution}) the expression for $\delta_i$ given in Proposition \ref{lem:integral}.
As for $\eta_{ii}$, we proceed similarly. Choosing again the uniform density on the unit ball and substituting $y_ k=\sqrt{z_k}$, $1 \le k \le p$, yields
\[
	\eta_{ii} 
	=\frac{p\Gamma(p/2)}{2\pi^{p/2}} \int_{S_{1,p}}
	\frac{\lambda_i^2z_i^2}{(\lambda_1 z_1 +\ldots+\lambda_p z_p)^2}\prod_{j=1}^p\frac{1}{\sqrt{z_j}}d z.
\]
Applying (\ref{eq:integralformel}) with $n = p$, $r =2$, $q_k = \lambda_k$ for $1 \le k \le p$, $p_i = 5/2$, and $p_k = 1/2$ for $k \neq i$, we obtain the expression for $\eta_{ii}$ as given in Proposition \ref{lem:integral}. As for $\eta_{ij}$ with $i \neq j$, we obtain a similar expression, to which we apply (\ref{eq:integralformel}) with the same parameters except $p_i = p_j = 3/2$, and $p_k = 1/2$ for $k \neq i,j$. This completes the proof.
\end{proof}

\bibliographystyle{abbrvnat}
{\small

}

\end{document}